\newtheorem{theorem}{Theorem}
\newtheorem{definition}{Definition}
\newtheorem{lemma}{Lemma}
\def\ci{\perp\!\!\!\perp}
\begin{document}
\runningtitle{Combining Graphical and Algebraic Approaches for Parameter Identification in Latent Variable SEM}
\twocolumn[
	\aistatstitle{Combining Graphical and Algebraic Approaches for \\ Parameter Identification in Latent Variable Structural Equation Models}
\aistatsauthor{ Ankur Ankan \And Inge Wortel \And Kenneth A. Bollen \And Johannes Textor}
\aistatsaddress{ Radboud University \And Radboud University \And University of North Carolina \\ at Chapel Hill \And Radboud University} ]

%

\begin{abstract}
	Measurement error is ubiquitous in many variables -- from blood
	pressure recordings in physiology to intelligence measures in
	psychology. Structural equation models (SEMs) account for the process
	of measurement by explicitly distinguishing between \emph{latent}
	variables and their measurement \emph{indicators}. Users often fit
	entire SEMs to data, but this can fail if some model parameters are not
	identified. The model-implied instrumental variables (MIIVs) approach
	is a more flexible alternative that can estimate subsets of model
	parameters in identified equations. Numerous methods to identify
	individual parameters also exist in the field of graphical models (such
	as DAGs), but many of these do not account for measurement effects.
	Here, we take the concept of ``latent-to-observed'' (L2O)
	transformation from the MIIV approach and develop an equivalent
	graphical L2O transformation that allows applying existing graphical
	criteria to latent parameters in SEMs. We combine L2O transformation
	with graphical instrumental variable criteria to obtain an efficient
	algorithm for non-iterative parameter identification in SEMs with
	latent variables. We prove that this graphical L2O transformation with
	the instrumental set criterion is equivalent to the state-of-the-art
	MIIV approach for SEMs, and show that it can lead to novel
	identification strategies when combined with other graphical criteria. 
\end{abstract}

\section{\MakeUppercase{Introduction}}

Graphical models such as directed acyclic graphs (DAGs) are currently used in
many disciplines for causal inference from observational studies. However, the
variables on the causal pathways modelled are often different from those being
measured. Imperfect measures cover a broad range of sciences, including health
and medicine (e.g., blood pressure, oxygen level), environmental sciences
(e.g., measures of pollution exposure of individuals), and the social (e.g.,
measures of socioeconomic status) and behavioral sciences (e.g., substance
abuse).

Many DAG models do not differentiate between the variables on the causal
pathways and their actual measurements in a dataset \citep{Tennant2019}.  While
this omission is defensible when the causal variables can be measured reliably
(e.g., age), it becomes problematic when the link
between a variable and its measurement is more
complex.  For example, graphical models employed in fields like Psychology or
Education Research often take the form of \emph{latent variable structural
equation models} (LVSEMs, Figure~\ref{fig:example_sem}; \citet{bollen1989}),
which combine a \emph{latent level} of unobserved variables and their
hypothesized causal links with a \emph{measurement level} of their observed
indicators (e.g., responses to questionnaire items). This structure is so
common that LVSEMs are sometimes simply referred to as SEMs. In contrast,
models that do not differentiate between causal factors and their measurements
are traditionally called \emph{simultaneous equations} or \emph{path
models}\footnote{Path models can be viewed as
LVSEMs with all noise set to 0; some work on path models,
importantly by Sewall Wright himself, does incorporate latent variables. }.

Once a model has been specified,
estimation can be performed in different ways. SEM parameters are
often estimated all at once by iteratively minimizing some difference measure
between the observed and the model-implied covariance matrices. However, 
this ``global'' approach has some pitfalls. First, all model parameters must be
algebraically identifiable for a unique minimum to exist; if only a single
model parameter is not identifiable, the entire fitting procedure may
not converge \citep{boomsma1985nonconvergence} or provide meaningless results.
Second, local model specification errors can propagate through the entire model
\citep{bollen2007latent}. Alternatively, \citet{bollen1996alternative}
introduced a ``local'', equation-wise approach for SEM parameter identification termed
``model-implied instrumental variables'' (MIIVs), which is non-iterative and
applicable even to models where not all parameters are simultaneously
identifiable. MIIV-based SEM identification is a mature approach with a
well-developed underlying theory as well as implementations in multiple
languages, including R \citep{fisher2019miivsem}.  

\begin{figure}
	\centering
	\includegraphics[page=1]{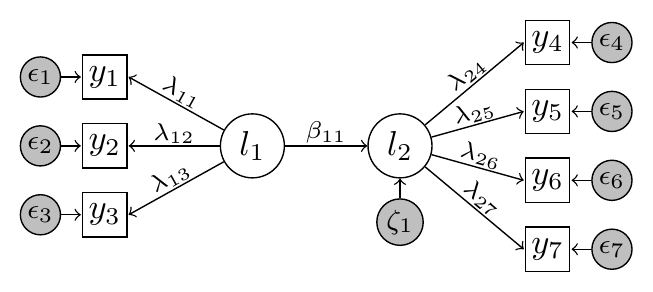}
	\caption{SEM based on the \emph{Industrialization and
		 Political Democracy} model \citep{bollen1989} with latent
		 variables $ l_1 $ (industrialization), and $ l_2 $ (political democracy).
		 The model contains 3 indicators for $ l_1 $: (1) gross
		 national product ($ y_1 $), (2) energy consumption
		 ($ y_2 $), and (3) labor force in industry ($ y_3 $), and 4
		 indicators for $ l_2 $: (1) press freedom rating ($y_4$), (2)
		 political opposition freedom ($y_5$), (3) election fairness
		 ($y_6$), and (3) legislature effectiveness ($y_7$). $
		 \lambda_{11} \dots \lambda_{13}, \lambda_{24} \dots
		 \lambda_{27}, \text{ and } \beta_{11} $ are the path
		 coefficients. $ \epsilon_1, \dots, \epsilon_7, \text{ and }
		 \zeta_1 $ represent noise/errors.}
	\label{fig:example_sem}
\end{figure}

Of all the model parameters that are identifiable in principle, any given
estimator (such as the MIIV-based approach) can typically only identify
parameters in identified equations and identified parameters in underidentified
equations. Different identification methods are therefore complementary and can
allow more model parameters to be estimated. Having a choice of such
methods can help
users to keep the stages of \emph{specification} and
\emph{estimation} separated. 
For example, a researcher who only has access to global
identification methodology might be tempted to impose model restrictions just
to ``get a model identified'' and not because there is a theoretical rationale
for the restrictions imposed. 
With more complementary methods to choose from, researchers can 
instead base model specification on substantive theory and causal assumptions.

The development of parameter identification methodology has received intense
attention in the graphical modeling field. The most general identification
algorithm is Pearl's do-calculus, which provides a complete solution in
non-parametric models
\citep{DBLP:conf/uai/HuangV06,DBLP:conf/aaai/ShpitserP06}.  The back-door and
front-door criteria provide more convenient solutions in special cases
\citep{pearl2009causality}.  While there is no practical general algorithm to
decide identifiability for models that are linear in their parameters, there has been a flurry of
work on graphical criteria for this case, such as instrumental sets
\citep{BritoP02}, the half-trek criterion \citep{Foygel2012}, and auxiliary
variables \citep{ChenKB17}. Unfortunately, these methods were all developed for
the acyclic directed mixed graph (ADMG) framework and require at least the
variables connected to the target parameter to be observed -- which is rarely
the case in SEMs. Likewise, many criteria in graphical models are based on
``separating'' certain paths by conditioning on variables, whereas no such
conditioning-based criteria exist for SEMs.

The present paper aims to make identification
methods from the graphical model literature available to the SEM field.
We offer the following contributions:
\begin{itemize}
\item We note that \citet{bollen1996alternative}'s latent-to-observed (L2O)
	transformation that transforms a latent variable SEM into a model with
	only observed variables can be used more generally in models containing 
	arbitrary mixtures of latent and observed variables
	(Section~\ref{sec:l2o}).
\item We present a graphical equivalent of L2O transformation that allows us to
	apply known graphical criteria to SEMs (Section~\ref{sec:graphical_l2o}).
\item We prove that Bollen's MIIV approach \citep{bollen1996alternative,bollen2004automating,Bollen2022} is
	equivalent to a graphical L2O transformation followed by the application of
	the graphical instrumental set criterion (\citet{BritoP02}; Section~\ref{sec:equiv}).
\item We give examples where the graphical L2O transformation approach 
	can identify more parameters compared to the MIIV approach 
	implemented in the R package MIIVsem (\citet{fisher2019miivsem}; Section~\ref{sec:examples}).
\end{itemize}
Thus, by combining the L2O transformation idea from the SEM literature with
identification criteria from the graphical models field, we bridge
these two fields -- hopefully to the benefit of both.

\section{\MakeUppercase{Background}}
\label{sec:background}

In this section, we give a brief background on basic graphical terminology
and define SEMs.

\subsection{Basic Terminology}

We denote variables using lowercase letters ($x_i$), sets and vectors of
variables  using uppercase letters ($X$), and matrices using boldface
($\bm{\Lambda}$).  We write the cardinality of a set $V$ as $|V|$, and the rank of
a matrix $ \bm{\Lambda} $ as $ \textrm{rk}(\bm{\Lambda}) $.  A \emph{mixed
graph} (or simply \emph{graph}) ${\cal G}=(V,A)$ is defined by sets of
variables (nodes) $V=\{x_1,\ldots,x_n\}$ and arrows $A$, where arrows can be
directed ($x_i \to x_j$) or bi-directed $(x_i \leftrightarrow x_j)$.  A
variable $x_i$ is called a \emph{parent} of another variable $x_j$ if $x_i \to
x_j \in A$, or a \emph{spouse} of $x_j$ if  $x_i \leftrightarrow x_j \in A$. We
denote the set of parents of $ x_i $ in $ {\cal G} $ as $ Pa_{\cal G}(x_i) $.

{\bfseries Paths:} A \emph{path} of length $k$ is a sequence of $k$ variables
such that each variable is connected to its neighbours by an arrow.  A
\emph{directed path} from $x_i$ to $x_j$ is a path on which all arrows point
away from the start node $x_i$. For a path $\pi$, let $\pi[x_i \sim x_j]$
denote its subsequence from $x_i$ to $x_j$, in reverse order when $x_i$ occurs
after $x_j$; for example, if $\pi=x_1 \gets x_2 \to x_3$ then $\pi[x_2 \sim
x_3]=x_2 \to x_3$ and $\pi[x_1 \sim x_2]=x_2 \to x_1$.  
Importantly, this definition of a path is common in DAG literature
but is different from the SEM literature, where ``path'' typically
refers to a single arrow between two
variables. Hence, a path in a DAG is equivalent to a sequence of paths in path
models. An \emph{acyclic directed mixed graph} (ADMG) is a mixed graph with no
directed path of length $\geq 2$ from a node to itself.

{\bfseries Treks and Trek Sides:} A \emph{trek} (also called \emph{open path})
is a path that does not contain a \emph{collider}, that is, a subsequence $ x_i
\to x_j \gets x_k $. A path that is not
open is a \emph{closed path}. Let $\pi$ be a trek from $x_i$ to $x_j$, then
$\pi$ contains a unique variable $t$ called the \emph{top}, also written as
$\pi^\leftrightarrow$, such that $\pi[t \sim x_i]$ and $\pi[t \sim x_j]$ are
both directed paths (which could both consist of a single node).
Then we call $\pi^\gets := \pi[t \sim x_i]$ the \emph{left side} and $\pi^\to
:= \pi[t \sim x_j]$ the \emph{right side} of $\pi$.\footnote{In the literature, treks are also often represented
as tuples of their left and right sides.}

{\bfseries Trek Intersection:} Consider two treks $\pi_i$ and $\pi_j$, then we
say that $\pi_i$ and $\pi_j$ \emph{intersect} if they contain a common variable
$v$. We say that they \emph{intersect on the same side} (have a same-sided
intersection) if $v$ occurs on $\pi_i^\gets$ and $\pi_j^\gets$ or $\pi_i^\to$
and $\pi_j^\to$; in particular, if $v$ is the top of $\pi_i$ or $\pi_j$, then
the intersection is always same sided.  Otherwise, $\pi_i$ and $\pi_j$
\emph{intersect on opposite sides} (have an opposite-sided intersection).

{\bfseries t-separation:} Consider two sets of variables, $L$ and $R$, and a set
$T$ of treks. Then we say that the tuple $(L,R)$ \emph{$t$-separates} (is a
$t$-separator of) $T$ if every trek in $T$ contains either a variable in $L$ on
its left side or a variable in $R$ on the right side. For two sets of variables,
$A$ and $B$, we say that $(L,R)$ $t$-separates  $A$ and $B$ if it $t$-separates
all treks between $A$ and $B$. The \emph{size} of a $t$-separator $(L,R)$ is
$|L|+|R|$.

\subsection{Structural Equation Models}
\label{subsec:sem_ram}

We now define structural equation models (SEMs) as they are usually considered
in the DAG literature \citep[e.g.,][]{sullivant2010trek}. This
definition is the same as the Reticular Action Model (RAM) representation
\citep{mcardle1984some} from the SEM literature. A \emph{structural equation
model} (SEM) is a system of equations linear in their parameters such that:
\[
X = \bm{B} X + E
\]
where $ X $ is a vector of variables (both latent and observed), $\bm{B}$ is a
$ |X| \times |X| $ matrix of \emph{path coefficients}, and
$E=\{\epsilon_1,\ldots,\epsilon_{|X|}\}$ is a vector of error
terms with a positive definite covariance matrix $\bm{\Phi}$ (which has typically
many or most of its off-diagonal elements set to 0) and zero
means.\footnote{This can be extended to allow for non-zero means, but our focus
here is on the covariance structure, so we omit that for simplicity.} The
\emph{path diagram} of an SEM $(\bm{B},\bm{\Phi})$ is a mixed graph with nodes
$V=X \cup E $ and arrows $A = \{ \epsilon_i \to x_i \mid i \in 1,\ldots, |X| \}
\cup \{ x_i \to x_j \mid \bm{B}[i,j] \neq 0 \} \cup \{ \epsilon_i
\leftrightarrow \epsilon_j \mid i \neq j, \bm{\Phi}[i,j] \neq 0 \} $.  We also
write $\beta_{x_i \to x_j}$ for the path coefficients in $\bm{B}$ and
$\phi_{\epsilon_i}$ for the diagonal entries (variances) in $\bm{\Phi}$. 
Each equation in the model corresponds to one node in this graph, where
the node is the dependent variable and its parent(s) are the explanatory 
variable(s). Each arrow represents one parameter to be
estimated, i.e., a \emph{path coefficient} (e.g., directed arrow between latents and
observed variables), a \emph{residual covariance} (bi-directed arrow), or a
\emph{residual variance} (directed arrow from error term to latent or
indicator). However, some of these parameters could be fixed; for example,
at least one parameter per latent variable needs to be fixed to set its scale,
and covariances between observed exogenous variables (i.e., observed
variables that have no parents) are typically fixed to their observed values.
In this paper, we focus on estimating the path coefficients.  We
only consider \emph{recursive} SEMs in this paper -- i.e., where the path
diagram is an ADMG -- even though the methodology can be generalized.

%

\citet{sullivant2010trek} established an
important connection between treks and the ranks of submatrices of the 
covariance matrix, which we will heavily rely on in our paper. 
\begin{theorem}{Trek separation; \citep[Theorem~2.8]{sullivant2010trek}}
\label{thm:treksep}
Given an SEM $\cal G$ with an implied covariance matrix $ \bm{\Sigma} $, and two subsets of variables $A,B \subseteq X$,
\[
\text{rk}(\bm{\Sigma}[A,B]) \leq \text{min}\left\{|L|+|R| \mid (L,R) \text{
 t-separates } A \text{ and } B \right\}
\] 
where the inequality is tight for generic covariance matrices implied by 
$\cal G$.
\end{theorem}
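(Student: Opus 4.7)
The plan is to establish both the rank upper bound and the generic tightness. For the inequality, I would start from the \emph{trek rule} for linear SEMs, which writes each covariance entry as $\bm{\Sigma}[a,b] = \sum_{\pi} \lambda(\pi)$, summed over all treks $\pi$ from $a \in A$ to $b \in B$, where $\lambda(\pi)$ is the product of path coefficients along $\pi$ together with the variance or covariance at its top $\pi^\leftrightarrow$. Given any t-separator $(L,R)$, each such trek must intersect $L$ on $\pi^\gets$ or $R$ on $\pi^\to$; I would canonically label every trek by the element of $L$ closest to $a$ along $\pi^\gets$ (if one exists) and otherwise by the element of $R$ closest to $b$ along $\pi^\to$.

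Grouping the trek-rule sum by this label gives a decomposition
\[
\bm{\Sigma}[a,b] \;=\; \sum_{\ell \in L} f_\ell(a)\, g_\ell(b) \;+\; \sum_{r \in R} f'_r(a)\, g'_r(b),
\]
where $f_\ell(a)$ collects weights of directed paths from $\ell$ to $a$ that meet $L$ only at $\ell$, $g_\ell(b)$ collects weights of treks from $\ell$ to $b$, and symmetric definitions apply to the $R$ summand. This is a matrix factorization $\bm{\Sigma}[A,B] = \bm{M}_1 \bm{M}_2$ with inner dimension $|L| + |R|$, so $\textrm{rk}(\bm{\Sigma}[A,B]) \leq |L| + |R|$; minimizing over t-separators yields the stated bound. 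The care required is to verify that the canonical labeling partitions treks without double-counting and that each summand factors cleanly into an $a$-side and a $b$-side contribution, which follows from the multiplicative structure of trek weights and the fact that the two sides share only the labeling vertex.

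For the tightness claim, I would use that minors of $\bm{\Sigma}[A,B]$ are polynomials in the free parameters $(\bm{B},\bm{\Phi})$, so the locus where the rank drops below the bound is Zariski-closed; it therefore suffices to exhibit one parameter setting achieving the rank bound. For this I would invoke a Menger-type combinatorial duality: the minimum $|L|+|R|$ over all t-separators of $A$ and $B$ equals the maximum size of a family of treks between $A$ and $B$ whose pairwise intersections are all on opposite sides. Given such a maximum family, one sets the coefficients of edges outside the family to zero and the rest to generic nonzero values; a Lindstr\"om--Gessel--Viennot style determinant expansion along the system then produces a nonvanishing $(|L|+|R|) \times (|L|+|R|)$ minor of $\bm{\Sigma}[A,B]$, certifying the rank.

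The main obstacle is establishing this Menger-type duality: unlike the familiar max-flow/min-cut correspondence, the separator is a \emph{pair} $(L,R)$ and the distinction between same-sided and opposite-sided intersections of treks must be tracked carefully, which makes the combinatorial argument the technical heart of the proof. The upper-bound direction, by contrast, is essentially bookkeeping once the canonical trek labeling and the trek rule have been set up.
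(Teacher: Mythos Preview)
The paper does not actually prove this theorem; it is quoted from \citet{sullivant2010trek} (their Theorem~2.8) and used as a black box throughout. There is therefore no ``paper's own proof'' to compare your proposal against.

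That said, your sketch is essentially the argument of the original reference. The upper bound via a rank-$(|L|+|R|)$ factorization obtained by labeling each trek by the first separator vertex it meets is correct, provided you verify that concatenating an $\ell$-to-$a$ directed path (avoiding $L\setminus\{\ell\}$) with an arbitrary $\ell$-to-$b$ trek is a weight-preserving bijection onto the treks labeled $\ell$, and symmetrically for the $R$ side; this is the bookkeeping you flag. For tightness, Sullivant, Talaska and Draisma make the Menger-type duality precise by constructing an auxiliary ``doubled'' directed graph in which treks from $A$ to $B$ correspond to ordinary directed paths and a $t$-separator $(L,R)$ corresponds to an ordinary vertex cut; classical Menger then gives the max-family/min-cut equality, and the Lindstr\"om--Gessel--Viennot lemma supplies the nonvanishing minor at a generic parameter point. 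Your proposal anticipates exactly this structure, including the identification of the duality step as the technical heart.
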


In the special case $A=\{x_1\},B=\{x_2\}$, Theorem~\ref{thm:treksep} implies
that $x_1$ and $x_2$ can only be correlated if they are connected by a trek.
Although the compatible covariance matrices of SEMs can also be characterized
in terms of $d$-separation \citep{chen2014graphical}, we use $t$-separation for
our purpose because it does not require conditioning on variables, and it
identifies more constraints on the covariance matrix implied by SEMs
than d-separation \citep{sullivant2010trek}.

\section{\MakeUppercase{Latent-To-Observed Transformations for SEMs}}
\label{sec:l2o}

A problem with IV-based identification criteria is that they cannot be directly
applied to latent variable parameters. The MIIV approach addresses this issue
by applying the L2O transformation to these model equations, such that they
only consist of observed variables. The L2O transformation in
\citet{bollen1996alternative} is presented on the LISREL representation of SEMs
(see Supplementary Material). In this section, we first briefly introduce
``scaling indicators'', which are required for performing L2O transformations.
We then use it to define the L2O transformation on the RAM notation (defined in
Section~\ref{subsec:sem_ram}) and show that with slight modification to the
transformation, we can also use it to partially identify equations. We will, from 
here on, refer to this transformation as the ``algebraic L2O
transformation'' to distinguish it from the purely graphical L2O transformation
that we introduce later in Section~\ref{sec:graphical_l2o}.


\subsection{Scaling Indicators}
The L2O transformation (both algebraic and graphical) uses the fact that any
SEM is only identifiable if the scale of each latent variable is fixed to an
arbitrary value (e.g., 1), introducing new algebraic constraints.  These
constraints can be exploited to rearrange the model equations in such a
way that latent variables can be eliminated.

The need for scale setting is well known in the SEM literature and arises from
the following lemma (since we could not find a direct proof in the literature
-- perhaps due to its simplicity -- we give one in the Appendix).
\begin{lemma} (Rescaling of latent variables). 
	\label{lemma:scaling}
	Let $x_i$ be a variable in an SEM $(\bm{B},\bm{\Phi})$.  Consider another SEM
	$(\bm{B}',\bm{\Phi}')$ where we choose a scaling factor $\alpha \neq 0$
	and change the coefficients as follows: For every parent $p$ of $x_i$,
	$\beta_{p \to x_i}' = \alpha^{-1} \; \beta_{p \to x_i}$; for every
	child $c$ of $x_i$, $\beta_{x_i \to c}' = \alpha \; \beta_{x_i \to c}$;
	for every spouse $s$ of $x_i$, $\phi_{x_i \leftrightarrow s}' =
	\alpha^{-1} \phi_{x_i \leftrightarrow s}$; and $\phi_{x_i}'=
	\alpha^{-2} \phi_{x_i}$.  Then for all $j,k \neq i$,
	$\bm{\Sigma}[j,k]=\bm{\Sigma}'[j,k]$.
\end{lemma}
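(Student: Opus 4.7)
The plan is to reduce the lemma to a straightforward linear change of variables. Let $\bm{D}$ be the diagonal matrix with $\bm{D}[i,i]=\alpha^{-1}$ and $\bm{D}[j,j]=1$ for $j \neq i$, and consider the rescaled vector $X' = \bm{D}X$. Since covariances transform as $\bm{\Sigma}' = \bm{D}\bm{\Sigma}\bm{D}^{\top}$, and $\bm{D}[j,j]=\bm{D}[k,k]=1$ whenever $j,k\neq i$, we immediately obtain $\bm{\Sigma}'[j,k] = \bm{\Sigma}[j,k]$, which is the desired conclusion. So the actual content of the proof is to verify that $X'$ satisfies precisely the SEM $(\bm{B}', \bm{\Phi}')$ described in the lemma.

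To do this, I would start from the defining equation $X = \bm{B}X + E$, substitute $X = \bm{D}^{-1}X'$, and multiply both sides by $\bm{D}$ to get $X' = (\bm{D}\bm{B}\bm{D}^{-1})X' + \bm{D}E$. This is again a RAM-style SEM, now with coefficient matrix $\hat{\bm{B}} = \bm{D}\bm{B}\bm{D}^{-1}$ and error covariance $\hat{\bm{\Phi}} = \operatorname{Cov}(\bm{D}E) = \bm{D}\bm{\Phi}\bm{D}^{\top}$. The rest of the argument is just reading off the entries of $\hat{\bm{B}}$ and $\hat{\bm{\Phi}}$ and matching them with the four cases listed in the lemma.

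A short case analysis does the job. An entry $\bm{B}[i,p]$ (encoding an arrow from parent $p$ to $x_i$) is scaled by $\bm{D}[i,i]\,\bm{D}[p,p]^{-1} = \alpha^{-1}$; an entry $\bm{B}[c,i]$ (encoding an arrow from $x_i$ to child $c$) is scaled by $\bm{D}[c,c]\,\bm{D}[i,i]^{-1} = \alpha$; the diagonal variance $\bm{\Phi}[i,i]$ is scaled by $\bm{D}[i,i]^2 = \alpha^{-2}$; and an off-diagonal $\bm{\Phi}[i,s]$ (residual covariance involving spouse $s$) is scaled by $\bm{D}[i,i]\,\bm{D}[s,s] = \alpha^{-1}$. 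All other entries are untouched, so no new edges appear in the path diagram, and $(\hat{\bm{B}}, \hat{\bm{\Phi}})$ coincides with the $(\bm{B}', \bm{\Phi}')$ defined in the statement. Combining with the first paragraph finishes the proof.

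There is no real conceptual obstacle here: the argument is pure linear algebra from the defining equation $X = \bm{B}X + E$, and the cancellation between parent and child scalings is exactly the trek-weight intuition that every directed passage through $x_i$ picks up a factor $\alpha^{-1}\cdot\alpha = 1$. The one thing worth being careful about is matching the RAM indexing convention (in which $\bm{B}[j,k]$ corresponds to the arrow $x_k\to x_j$) with the parent/child/spouse language in the lemma statement; once that bookkeeping is set up, the verification is mechanical.
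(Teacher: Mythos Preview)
Your proof is correct, but it takes a different route from the paper's own argument. The paper works in the canonical form (replacing each bidirected edge by a new latent common cause) and then verifies the claim trek by trek via the trek rule: every trek between $x_j$ and $x_k$ with $j,k\neq i$ either avoids $x_i$ entirely, has $x_i$ as its top node (picking up $\alpha\cdot\alpha^{-2}\cdot\alpha=1$), or passes through $x_i$ in a directed segment (picking up $\alpha^{-1}\cdot\alpha=1$). Your argument instead stays at the matrix level: define $\bm{D}=\operatorname{diag}(1,\ldots,\alpha^{-1},\ldots,1)$, observe that the rescaled system is $(\bm{D}\bm{B}\bm{D}^{-1},\,\bm{D}\bm{\Phi}\bm{D}^{\top})$, and read off both the new parameters and the invariance of the off-$i$ block of $\bm{\Sigma}$ directly from conjugation. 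This is more elementary in that it never invokes the trek rule or the canonical-form reduction, and it makes the ``no new edges'' claim transparent since conjugation by a diagonal matrix preserves the zero pattern. The paper's version, by contrast, fits more naturally into its graphical narrative and foreshadows the trek-based reasoning used later; your last paragraph already notes that the two are morally the same cancellation, just packaged differently.
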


If $x_i$ is a latent variable in an SEM, then Lemma~\ref{lemma:scaling} implies
that we will get the same implied covariance matrix among the observed
variables for all possible scaling factors. In other words, we need to set the
scale of $x_i$ to an arbitrary value to identify any parameters in such a
model.  Common choices are to either fix the error variance of every latent
variable such that its total variance is $ 1 $, or to choose one indicator per
latent and set its path coefficient to $ 1 $. The latter method is often
preferred because it is simpler to implement. The chosen indicators for each
latent are then called the \emph{scaling indicators}. However, note that
Lemma~\ref{lemma:scaling} tells us that we can convert any fit based on scaling
indicators to a fit based on unit latent variance, so this choice does not
restrict us in any way.

\subsection{Algebraic L2O Transformation for RAM}
The main idea behind algebraic L2O transformation is to replace each of the
latent variables in the model equations by an observed expression involving
the scaling indicator. 
As in \citet{bollen1996alternative}, we assume that each of the latent
variables in the model has a unique scaling indicator. We show the
transformation on a single model equation to simplify the notation. Given an
SEM $\cal G$ on variables $ X $, we can write the equation of any variable $
x_i \in X $ as:
\[
x_i = \epsilon_i + \sum_{x_j \in Co(x_i)}\beta_{x_j \rightarrow x_i} x_j
\]

where $ Co(x_i) = \{ Co_l(x_i), Co_o(x_i) \} $ is the set of covariates in the
equation for $ x_i $. $ Co_l(x_i) $ and $ Co_o(x_i) $ are the latent and
observed covariates, respectively. Since each latent variable  $x_j$ has a
unique scaling indicator $x_j^s$, we can write the latent variable as
 $ x_j = x_j^s - \epsilon_{x_j^s} $.
Replacing all the latents in the above equation with their scaling indicators,
we obtain:
\[
 		x_i = \epsilon_i + \sum_{x_j \in Co_l(x_i)} \beta_{x_j \rightarrow x_i} (x_j^s - \epsilon_{x_j^s}) + \sum_{x_k \in Co_o(x_i)} \beta_{x_k \rightarrow x_i} x_k
\]


If $ x_i $ is an observed variable, the transformation is complete as the
equation only contains observed variables. But if $ x_i $ is a latent variable,
we can further replace $ x_i $ as follows:

\[
	x_i^s = \epsilon_i + \epsilon_{x_i^s} + \sum_{x_j \in Co_l(x_i)} \beta_{x_j \rightarrow x_i} (x_j^s - \epsilon_{x_j^s}) + \sum_{x_k \in Co_o(x_i)} \beta_{x_k \rightarrow x_i} x_k
\]

As the transformed equation now only consists of observed variables, IV-based
criteria can be applied to check for identifiability of parameters.

\subsection{Algebraic L2O Transformations for Partial Equation Identification}
\label{sec:partial_transform}
In the previous section, we used the L2O transformation to replace all the
latent variables in the equation with their scaling indicators, resulting in an
equation with only observed variables. An IV-based estimator applied to these
equations would try to estimate all the parameters together. However, there are
cases (as shown in Section~\ref{sec:examples}) where not all of the parameters
of an equation are identifiable. If we apply L2O transformation to the whole
equation, none of the parameters can be estimated.

Here, we outline an alternative, ``partial'' L2O transformation that
replaces only some of the latent variables in the equation. Assuming $
Co_l^i(x_i) \subset Co_l(x_i) $ as the set of latent variables whose parameters
we are interested in estimating, we can write the partial L2O transformation
as:
\[
	\begin{split}
		x_i = \epsilon_i + & \sum_{x_j \in Co_l^i(x_i)} \beta_{x_j \rightarrow x_i} (x_j^s - \epsilon_{x_j^s}) +  \\
				   & \sum_{x_k \in Co_l(x_i) \setminus Co_l^i(x_i)} \beta_{x_k \rightarrow x_i} x_k + \sum_{x_l \in Co_o(x_i)} \beta_{x_l \rightarrow x_i} x_l
	\end{split}
\]

Similar to the previous section, we can further apply L2O transformation for $
x_i $ if it is also a latent variable. As the parameters of interest are now
with observed covariates in the transformed equation, IV-based criteria can be
applied to check for their identifiability while treating the variables in $
Co_l(x_i) \setminus Co_l^i(x_i) $ as part of the error term.

\section{\MakeUppercase{Graphical L2O Transformation}}
\label{sec:graphical_l2o}

Having shown the algebraic L2O transformation, we now show that these
transformations can also be done graphically for path diagrams. An
important difference is that the algebraic transformation
is applied to all equations in a model simultaneously by 
replacing all latent variables, whereas we 
apply the graphical transform only to a single equation at a time (i.e.,
starting from the original graph for every equation). Applying the graphical
transformation to multiple equations simultaneously results in a 
non-equivalent model with a different implied covariance matrix.

Given an SEM $\cal G$, the equation for any variable $x_j$ can be written in
terms of its parents in the path diagram as: $ x_j = \sum_{x_k \in
\textrm{Pa}_{\cal G}(x_j)} \beta_{x_k \to x_j} x_k + \epsilon_{x_j} $. Using this equation,
we can write the relationship between any latent variable $ x_j $ and its scaling 
indicator $ x_j^s $ as (where $ \beta_{x_j \to x_j^s} $ is fixed to $ 1 $):

\begin{equation}
	\label{eq:graphical_transform}
	x_j = x_j^s - \epsilon_{x_j^s} - \sum_{x_k \in \textrm{Pa}_{\cal G}(x_j^s) \setminus x_j} \beta_{x_k \to x_j^s} x_k
\end{equation}

We use this graphical L2O transformation as follows. Our goal is to identify a path
coefficient $\beta_{x_i \to x_j}$ in a model $\cal G$.  If both $x_i$ and $x_j$
are observed, we leave the equation untransformed and apply graphical 
identification criteria \citep{chen2014graphical}.  Otherwise,
we apply the graphical L2O transformation to $\cal G$ with respect to $ x_i $,
$ x_j$, or both  variables -- ensuring that the resulting model ${\cal G}'$
contains an arrow between two observed variables $x_i'$ and $x_j'$, where the
path coefficient $\beta_{x_i' \to x_j'}$ in ${\cal G}'$ equals $\beta_{x_i \to
x_j}$ in $\cal G$.

We now illustrate this approach on an example for each of the three possible
combinations of latent and observed variables.

\begin{figure}[t]
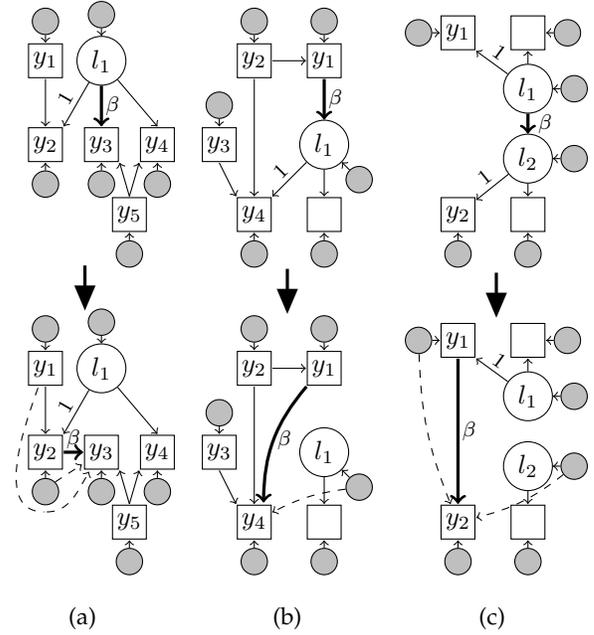

    \centering
    \begin{subfigure}[b]{.33\linewidth}
    	\centering
	\includegraphics[page=4]{figures-inge.pdf}
        \caption{}
	\label{fig:l2o_parent}
    \end{subfigure}%
    \begin{subfigure}[b]{.33\linewidth}
    	\centering
	\includegraphics[page=5]{figures-inge.pdf}
        \caption{}
	\label{fig:l2o_child}
    \end{subfigure}%
    \begin{subfigure}[b]{.33\linewidth}
    	\centering
	\includegraphics[page=6]{figures-inge.pdf}
        \caption{}
	\label{fig:l2o_both}
    \end{subfigure}
        \caption{Example L2O transformations for path coefficients
(a) from a latent to an observed variable; 
(b) from an observed to a latent variable; 
(c) between two latent variables.}
\end{figure}

{\bfseries Latent-to-observed arrow:} Consider the arrow $l_1 \to y_3 $ in
Figure~{\ref{fig:l2o_parent}}, and let $\beta$ be the path coefficient of this
arrow. To perform the L2O transformation, we start with the
model equation involving $\beta$: 
\[
y_3 = \beta l_1 + \beta_{y_5 \to y_3} y_5 + \epsilon_3
\]
We then use Equation~\ref{eq:graphical_transform} to write the latent variable,
$ l_1 $ in terms of its scaling indicator, $y_2$ as: $ l_1 = y_2 - \epsilon_{2} +
\beta_{y_1 \to y_2} y_1 $, and replace it in the above equation to obtain:
\[
	y_3 = \beta y_2 - \beta \beta_{y_1 \to y_2} y_1 + \beta_{y_5 \to y_3} y_5 - \beta \epsilon_2 + \epsilon_3
\]
The transformation has changed the equation for $ y_3 $, which now regresses on
the observed variables $ y_2 $, $ y_1 $, and $ y_5 $, as well as the errors $
\epsilon_2 $ and $ \epsilon_3 $. We make the same changes in the graphical
structure by adding the arrows $ y_2 \to y_3 $, $ y_1 \to y_3 $, $ \epsilon_2
\to y_3 $, and removing the arrow $ l_1 \to y_3 $.

{\bfseries Observed-to-latent arrow:} Consider the arrow $y_1 \to l_1$ in
Figure~\ref{fig:l2o_child} with coefficient $\beta$. For L2O transformation in
this case, we apply Equation~\ref{eq:graphical_transform} to replace $ l_1 $ in
the model equation $ l_1 = \beta y_1 + \zeta_1 $ to obtain:
\[
	y_4 = \beta y_1 + \beta_{y_3 \to y_4} y_3 + \beta_{y_2 \to y_4} y_2 + \zeta_1 + \epsilon_4
\]
The equivalent transformation to the path diagram consists of adding the arrows $ y_1 \to y_4
$, and $ \zeta_1 \to y_4 $, and removing the arrows: $ l_1 \to y_4 $ and $ y_1 \to l_1 $.

{\bfseries Latent-to-latent arrow:} Consider the arrow $ l_1 \to l_2$ in
Figure~\ref{fig:l2o_both} with coefficient $ \beta $. In this case, we again
apply Equation~\ref{eq:graphical_transform} to replace both $ l_1 $ and $ l_2 $ in the model 
equation for $ l_2 = \beta l_1 + \zeta_2 $. This is equivalent to applying two L2O transformations in sequence
and leads to the transformed equation:
\[
	y_2 = \beta y_1 - \beta \epsilon_1 + \zeta_2 + \epsilon_2
\]

Equivalently, we now add the arrows $ y_1 \to y_2 $, $
\zeta_2 \to y_2 $, and $ \epsilon_1 \to y_2 $. We also remove the arrows $
l_2 \to y_2 $ and $ l_1 \to l_2 $.

\section{\MakeUppercase{Model-Implied Instrumental Variables Are Equivalent to Instrumental Sets}}
\label{sec:equiv}

After applying the L2O transformations from the previous sections, we can
use either algebraic or graphical criteria to check whether the path
coefficients are identifiable. In this section, we introduce the Instrumental
set criterion \citep{BritoP02} and the MIIV approach from
\citet{bollen1996alternative} that precedes it, and show that they are
equivalent. Importantly, even though we refer to the MIIV
approach as an algebraic criterion to distinguish it from the graphical
criterion, it is not a purely algebraic approach and utilizes the graphical
structure of the model to infer correlations with error terms.

We will first focus on the instrumental set criterion proposed by 
\citet{BritoP02}. We state the criterion below in a slightly rephrased
form that is consistent with our notation in Section~\ref{sec:background}:

\begin{definition}[Instrumental Sets \citep{BritoP02}]
\label{defn:graphicis}
Given an ADMG $\cal
	G$, a variable $y$, and a subset $X$ of the parents of $y$, 
	a set of variables
	$I$ fulfills the 
	\emph{instrumental set condition}
	if for {some} permutation $ i_1 \ldots i_k $ of
	$ I $ and {some} permutation
	$ x_1 \ldots x_k $ of $ X $ we have: 
	\begin{enumerate}
		\item There are no treks from $I$ to $y$ in the graph ${\cal
			G}_{\overline{X}}$ obtained by removing all arrows 
			between $X$ and $y$. 
		\item For each $j$, $1 \leq j \leq k$, there is a trek $\pi_j$ from
			$I_j$ to $X_j$ such that for all $i < j$: (1) $I_i$ does not
			occur on any trek $\pi_j$; and (2) all intersections between
			$\pi_i$ and $\pi_j$ are on the left side of $\pi_i$ and the
			right side of $\pi_j$.
	\end{enumerate}
\end{definition}

Its reliance on permutation makes the instrumental set criterion fairly
complex; in particular, it is not obvious how an algorithm to find such sets
could be implemented, since enumerating all possible permutations and paths is
clearly not a practical option. Fortunately, we can rewrite this criterion into
a much simpler form that does not rely on permutations and has an obvious
algorithmic solution.

\begin{definition}[Permutation-free Instrumental Sets]
\label{defn:graphicistrek}
	Given an ADMG $\cal G$, a variable $y$ and a subset $X$ of the parents
	of $y$, a set of variables $I$ fulfills the \emph{permutation-free
	instrumental set condition} if: (1) There are no treks from $I$ to $y$
	in the graph ${\cal G}_{\overline{X}}$ obtained by removing all arrows
	leaving $X$, and (2) All $t$-separators $(L,R)$ of $I$ and $X$ have
	size $\geq k$. 
\end{definition}

\begin{theorem}
\label{thm:graphicistrek}
The instrumental set criterion is equivalent to the permutation-free
instrumental set criterion.
\end{theorem}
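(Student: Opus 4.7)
The plan is to observe that condition~(1) in both definitions is essentially the same trek-blocking requirement on paths from $I$ to $y$ after severing the relevant arrows out of $X$, so the task reduces to showing that Brito--Pearl's trek-system condition~(2) in Definition~\ref{defn:graphicis} is equivalent to the $t$-separator size condition in Definition~\ref{defn:graphicistrek}. The central tool is Theorem~\ref{thm:treksep}, which equates the minimum $t$-separator size between $I$ and $X$ with the generic rank of $\bm{\Sigma}[I,X]$.

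For the forward direction, I would assume that treks $\pi_1,\ldots,\pi_k$ witness condition~(2) of Definition~\ref{defn:graphicis} for some permutation of $I$ and $X$, and let $(L,R)$ be any $t$-separator of $I$ and $X$. Each $\pi_j$ must be hit either by some $\ell_j\in L$ on $\pi_j^\gets$ or by some $r_j\in R$ on $\pi_j^\to$. I would then use Brito--Pearl's side-intersection property --- intersections between $\pi_i$ and $\pi_j$ for $i<j$ lie on $\pi_i^\gets$ and $\pi_j^\to$ --- to argue that these hitting variables can be chosen distinct across $j$, yielding $|L|+|R|\ge k$.

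For the reverse direction, Theorem~\ref{thm:treksep} tells us that if every $t$-separator has size $\ge k$ then generically $\textrm{rk}(\bm{\Sigma}[I,X])=k$, and the Lindstr\"om--Gessel--Viennot-style combinatorial content of trek separation then produces $k$ treks between $I$ and $X$ whose pairwise intersections have exactly the side structure Brito--Pearl demand, after a suitable relabeling of $I$ and $X$. Condition~(1) is used here to rule out any trek from $I$ to $y$ that bypasses $X$ and could otherwise spoil the side-compatibility structure.

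The hard part will be the reverse direction. Extracting a Brito--Pearl-compatible trek system from the $t$-separator lower bound requires a Menger-type inductive construction in the poset of treks: pick $\pi_1$ first, then $\pi_2$ compatible with $\pi_1$ (i.e., avoiding $I_1$ and with every shared vertex lying on $\pi_1^\gets$ and $\pi_2^\to$), and so on. Proving that this greedy procedure cannot get stuck whenever no $t$-separator of size $<k$ exists --- and that the final reindexing yields the permutations of $I$ and $X$ required by Definition~\ref{defn:graphicis} --- is where most of the technical effort will lie.
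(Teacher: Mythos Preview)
Your forward direction is correct and is essentially the paper's argument: a same-sided intersection between two of the witnessing treks would let a single vertex $t$-separate both, so the hitting vertices supplied by any $t$-separator $(L,R)$ must all be distinct and $|L|+|R|\ge k$.

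For the reverse direction, two points.

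First, routing the argument through Theorem~\ref{thm:treksep} and the generic rank of $\bm{\Sigma}[I,X]$ is an unnecessary detour: the equivalence of the two second conditions is a purely combinatorial statement about the graph, and the paper proves it without ever touching covariance matrices. Relatedly, your claim that condition~(1) is needed ``to rule out any trek from $I$ to $y$ that bypasses $X$'' in this step is wrong; the paper's proof of the equivalence of the two condition~(2)'s never uses condition~(1).

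Second, and more importantly, you correctly flag the hard part but do not supply the idea that makes it work. The paper does \emph{not} build the trek system greedily. Instead it chooses, among all systems of $k$ treks matching $I$ to $X$, one of \emph{minimal total length}. Such a system can have no same-sided intersection (otherwise fewer than $k$ vertices would $t$-separate). One then defines $\pi_i \preceq \pi_j$ whenever they meet at a vertex lying on $\pi_i^\gets$ and $\pi_j^\to$, and the crux is that $\preceq$ is acyclic: given a cycle $\pi_{i_1}\preceq\cdots\preceq\pi_{i_l}\preceq\pi_{i_1}$, one splices each trek's prefix with the next trek's suffix around the cycle to produce $l$ strictly shorter treks between the same endpoints, still without same-sided intersections, contradicting minimality. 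Any linear extension of $\preceq$ then gives the permutation Definition~\ref{defn:graphicis} asks for, and requirement~(a) falls out of the same side analysis. Your LGV remark gets you only as far as ``no same-sided intersection''; it does not produce the ordering, and your greedy proposal has no mechanism to prevent getting stuck. The minimal-length-plus-rotation argument is the missing ingredient.
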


\begin{proof}
 This is shown by adapting a closely related existing result \citep{ZanderL16}. 
 See Supplement for details.
\end{proof}

\begin{definition}[Algebraic Instrumental Sets (\citet{bollen1996alternative}, \citet{bollen2012instrumental})]
	\label{defn:algebraicis}
	Given a regression equation $y = B \cdot X + \epsilon$, where $X$ possibly
	correlates with $\epsilon$, a set of variables
	$I$ fulfills the \emph{algebraic instrumental set condition} if: (1) $I \ci \epsilon$,
	(2) $\textrm{rk}(\bm{\Sigma}[I,X]) = |X|$, and (3) $\textrm{rk}(\bm{\Sigma}[I]) = |I|$
\end{definition}

Having rephrased the instrumental set criterion 
without relying on permutations, we can now establish a
correspondence to the algebraic condition for instrumental variables -- which also
serves as an alternative correctness proof for Definition~\ref{defn:graphicis}
itself. The proof of the Theorem is included in the Supplementary Material.

\begin{theorem}
\label{thm:algebraictographical}
Given an SEM $(\bm{B},\bm{\Phi})$ with path diagram ${\cal G}=(V,A)$ 
and a variable $y \in V$, let
$X$ be a subset of the parents of $y$ in $\cal G$. Then
a set of variables $I \subseteq V$ fulfills the algebraic instrumental set 
condition with respect to the equation
\[
	y = B \cdot X + \epsilon ; \textit{ where } \epsilon = \sum_{p \in \textrm{Pa}_{\cal G}(y) \setminus X} p + \epsilon_y
\]
if and only if $I$ fulfills the instrumental set 
condition with respect to $X$ and $y$ in $\cal G$.
\end{theorem}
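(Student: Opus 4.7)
The plan is to prove both directions of the equivalence by translating each of the three algebraic conditions (Definition~\ref{defn:algebraicis}) into graphical statements via Theorem~\ref{thm:treksep}, and then matching them to the two conditions of the permutation-free instrumental set criterion (Definition~\ref{defn:graphicistrek}), which by Theorem~\ref{thm:graphicistrek} is interchangeable with Definition~\ref{defn:graphicis}. Throughout, I work with generic parameter values so that the rank inequality in Theorem~\ref{thm:treksep} is tight, and I assume $|I|=|X|=k$ as demanded by the graphical definition.

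For condition (1), I would expand $\epsilon = \sum_{p \in \textrm{Pa}_{\cal G}(y) \setminus X} p + \epsilon_y$ and note that $I \ci \epsilon$ is equivalent to $\textrm{Cov}(i,v)=0$ generically for every $i \in I$ and every summand $v$ of $\epsilon$. By the singleton case of Theorem~\ref{thm:treksep}, this vanishing is equivalent to the absence of any trek in $\cal G$ between $i$ and $v$. Each such trek extends by the arrow $v \to y$ (which survives in ${\cal G}_{\overline{X}}$ because $v \notin X$) into a trek from $i$ to $y$ in ${\cal G}_{\overline{X}}$; conversely, the final arrow of any trek from $i$ to $y$ in ${\cal G}_{\overline{X}}$ must enter $y$ from some such $v$. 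Hence condition~(1) is equivalent to the first clause of the graphical criterion.

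For condition (2), I would apply Theorem~\ref{thm:treksep} with $A=I$, $B=X$: generically $\textrm{rk}(\bm{\Sigma}[I,X])$ equals the minimum size of a $t$-separator of $I$ and $X$. Because the endpoint $x \in X$ of any trek from $I$ to $X$ lies on that trek's right side, $(\emptyset,X)$ is a $t$-separator of size $|X|$, so the minimum is at most $|X|$ and equals $|X|$ iff no strictly smaller $t$-separator exists. This is precisely the second clause of the graphical criterion. Condition~(3) then comes for free: any null vector of $\bm{\Sigma}[I]$ corresponds to a zero-variance (hence almost-surely constant) linear combination of $I$, which has zero covariance with $X$, yielding $\textrm{rk}(\bm{\Sigma}[I,X]) \leq \textrm{rk}(\bm{\Sigma}[I]) \leq |I|$. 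With $|I|=|X|=k$ and condition~(2) already in force, this chain forces $\textrm{rk}(\bm{\Sigma}[I]) = |I|$.

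The main obstacle I anticipate is making the ``generic'' bookkeeping rigorous: the algebraic conditions are numerical statements about the specific $\bm{\Sigma}$ implied by the SEM, whereas both the trek-separation bound and the graphical criterion are purely structural. I would address this by treating the entries of $\bm{\Sigma}$ as polynomials in the SEM parameters $(\bm{B},\bm{\Phi})$ and invoking the tightness clause of Theorem~\ref{thm:treksep}, so that any rank deficiency not forced by the graph is confined to a measure-zero subvariety of parameter space. A smaller but important point is to handle error terms like $\epsilon_y$ consistently as graph nodes whose only outgoing arrow is into their associated variable, which is what makes the trek-extension bijection in the condition-(1) step unambiguous.
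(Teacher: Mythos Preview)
Your proposal is correct and follows essentially the same route as the paper's proof: condition~(1) is handled via trek existence and extension/restriction through the last arrow into $y$, and condition~(2) is read off directly from Theorem~\ref{thm:treksep}. The one minor difference is condition~(3): the paper dispatches it in one line from the assumed positive-definiteness of $\bm{\Phi}$ (which makes $\bm{\Sigma}$, and hence every principal submatrix $\bm{\Sigma}[I]$, full rank), whereas you derive it from condition~(2) via the rank sandwich $\textrm{rk}(\bm{\Sigma}[I,X]) \leq \textrm{rk}(\bm{\Sigma}[I])$ under the assumption $|I|=|X|$; both arguments are valid, the paper's being slightly more direct and not dependent on $|I|=|X|$.
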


In the R package MIIVsem \citep{fisher2019miivsem} implementation of MIIV, all
parameters in an equation of an SEM are simultaneously identified by (1)
applying an L2O transformation to all the latent variables in this equation;
(2) identifying the composite error term of the resulting equation; and (3)
applying the algebraic instrumental set criterion based on the model matrices
initialized with arbitrary parameter values and derived total effect and
covariance matrices; see \citet{bollen2004automating} for details.
Theorem~\ref{thm:algebraictographical} implies that the MIIVsem approach is
generally equivalent to first applying the graphical L2O transform followed by
the instrumental set criterion (Definition~\ref{defn:graphicis}) using the set
of all observed parents of the dependent variable in the equation as $X$.

\section{\MakeUppercase{Examples}}
\label{sec:examples}

\begin{figure}[t]
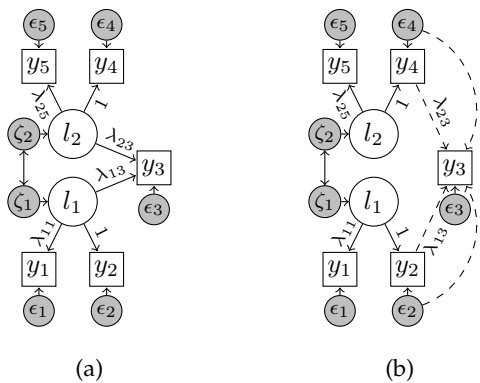

	\begin{subfigure}[b]{0.5 \linewidth}
		\centering
		\includegraphics[page=7]{figures-inge.pdf}
		\caption{}
		\label{fig:transform_example}
	\end{subfigure}%
	\begin{subfigure}[b]{0.5 \linewidth}
		\centering
		\includegraphics[page=8]{figures-inge.pdf}
		\caption{}
		\label{fig:transform_example_single}
	\end{subfigure}
\caption{(a) Example model following the structure of
	Figure~\ref{fig:example_sem} with explicit error terms. 
	(b) L2O transformation for the model in (a) for identifying both
	coefficients of the equation for $y_3$ simultaneously. We end up with
	the regression equation $y_3 \sim y_2 + y_4$ and can identify both
	coefficients using $y_1$ and $y_5$ as instrumental variables.}
\label{fig:examples1}
\end{figure}

Having shown that the algebraic instrumental set criterion is equivalent to the
graphical instrumental set criterion, we now show some examples of
identification using the proposed graphical approach and compare it to the MIIV 
approach implemented in MIIVsem \footnote{In some examples, a manual implementation
of the MIIV approach can permit estimation of models that are not covered by the
implementation in MIIVsem}.
First, we show an example of a full equation identification where we identify
all parameters of an equation altogether. Second, we show an example of partial
L2O transformation (as shown in Section~\ref{sec:partial_transform}) that
allows us to estimate a subset of the parameters of the equation. Third, we
show an example where the instrumental set criterion fails to identify any
parameters, but the conditional instrumental set criterion \citep{BritoP02} can
still identify some parameters. Finally, we show an example where the
parameters are inestimable even though the equation is identified.


\subsection{Identifying Whole Equations}
\label{sec:example_whole}
In this section, we show an example of identifying a whole equation using the
graphical criterion. Let us consider an SEM adapted from \citet{Shen2001}, as
shown in Figure~\ref{fig:examples1}a. We are interested in estimating the
equation $ y_3 \sim l_1 + l_2 $, i.e., parameters $ \lambda_{13} $ and $
\lambda_{23} $. Doing a graphical L2O transformation for both these parameters
together adds the edges $ y_2 \to y_3 $, $ y_4 \to y_3 $, $ \epsilon_2 \to y_3
$, and $ \epsilon_4 \to y_3 $, and removes the edges $ l_1 \to y_3 $, and $ l_2
\to y_3 $, resulting in the model shown in Figure~\ref{fig:examples1}b. Now, for
estimating $ \lambda_{13} $ and $ \lambda_{23} $ we can use the regression
equation  $ y_3 \sim y_2 + y_4 $, with $ y_1 $ and $ y_5 $ as the IVs. As $ y_1
$ and $ y_5 $ satisfy Definition~\ref{defn:graphicistrek}, both the parameters
are identified. Both of these parameters are also identifiable using MIIVsem.

\subsection{Identifying Partial Equations}
\begin{figure}[t]
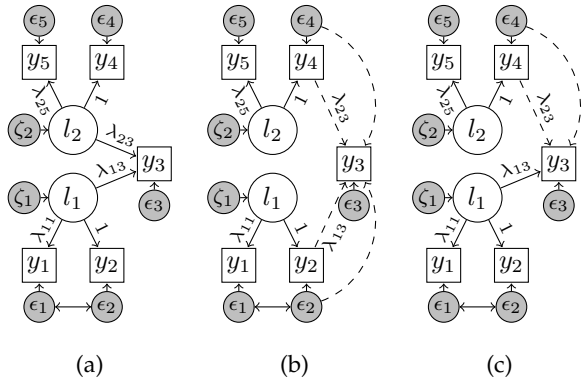

	\centering
	\begin{subfigure}[b]{0.33 \linewidth}
		\centering
		\includegraphics[page=11]{figures-inge.pdf}
		\caption{}
		\label{fig:example_orig}
	\end{subfigure}%
	\begin{subfigure}[b]{0.33 \linewidth}
		\centering
		\includegraphics[page=12]{figures-inge.pdf}
		\caption{}
		\label{fig:example_non_corr}
	\end{subfigure}%
	\begin{subfigure}[b]{0.33 \linewidth}
		\centering
		\includegraphics[page=13]{figures-inge.pdf}
		\caption{}
		\label{fig:transform_non_corr}
	\end{subfigure}
	\caption{(a) Adapted SEM from \citet{Shen2001}; modified by making $ l_1
		 $ and $ l_2 $ uncorrelated and $ \epsilon_1 $ and $ \epsilon_2
		 $ correlated. (b) Transformed model for estimating $ y_3 \sim
		 l_1 + l_2 $. The equation is not identified as $ y_5 $ is the
		 only IV. (c) With partial L2O transformation, $ \lambda_{23} $
		 can be estimated using $ y_5 $ as the IV.}
\label{fig:examples3}
\end{figure}

For this section, we consider a slightly modified version of the model in the
previous section. We have added a correlation between $ \epsilon_1 $ and
$\epsilon_2$, and have allowed the latent variables, $l_1$ and $l_2$ to be
uncorrelated, as shown in Figure~{\ref{fig:example_orig}}. The equation $ y_3
\sim l_1 + l_2 $ is not identified in this case, as $ y_5 $ is the only
available IV (Figure~\ref{fig:example_non_corr}). However, using the partial
graphical transformation for $ l_2 $ while treating $ l_1 $ as an error term
(Figure~{\ref{fig:transform_non_corr}}), the parameter $ \lambda_{23} $ can be
identified by using $y_5$ as the IV. As the R package MIIVsem always tries to
identify full equations, it is not able to identify either of the parameters in
this case -- although this would be easily doable when applying the MIIV
approach manually.



\subsection{Identification Based on Conditional IVs}
\begin{figure}[t]
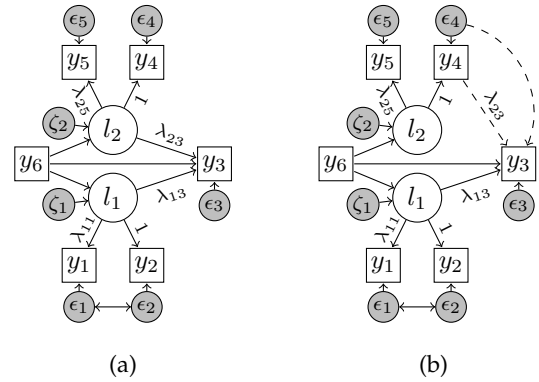

	\centering
	\begin{subfigure}[b]{0.5 \linewidth}
		\centering
		\includegraphics[page=14]{figures-inge.pdf}
		\caption{}
		\label{fig:example_conditional_iv}
	\end{subfigure}%
	\begin{subfigure}[b]{0.5 \linewidth}
		\centering
		\includegraphics[page=15]{figures-inge.pdf}
		\caption{}
		\label{fig:transform_conditional_iv}
	\end{subfigure}
	\caption{(a) Modified version of the Figure~\ref{fig:example_orig} model, where $l_1$ and $l_2$
		share an observed cause $y_6$.
		$\lambda_{13} $ and $ \lambda_{23} $ are still not
		simultaneously identified as no IVs are available. (b) Even
		with partial transformation, $ \lambda_{23} $ is no longer
		identified as $ y_5 $ is not an IV because of the open paths $
		y_5 \gets l_2 \gets y_6 \to y_3 $ and $ y_5 \gets l_2 \gets y_6
		\to l_1 \to y_3 $. However, using the conditional instrumental
		set criterion, we can identify $\lambda_{23}$ by using $y_5$ as
		a \emph{conditional IV} for the equation $y_3 \sim y_4$, as
		conditioning on $ y_6 $ blocks the open paths.}
\label{fig:examples4}
\end{figure}

So far, we have only considered the instrumental set criterion, but many
other identification criteria have been proposed for DAGs. For example, we can
generalize the instrumental set criteria to hold \emph{conditionally}
on some set of observed variables \citep{BritoP02}. There can be cases when
conditioning on certain variables allows us to use
conditional IVs. This scenario might not occur when we have a standard latent
and measurement level of variables, but might arise in specific cases; for
example, when there are exogenous covariates that can be measured without error
(such as the year in longitudinal studies), or interventional variables in
experimental settings (such as complete factorial designs) which are
uncorrelated, and observed exogenous by definition.
Figure~{\ref{fig:example_conditional_iv}} shows a hypothetical example in which
the latent variables $ l_1 $ and $ l_2 $ are only correlated through a common
cause $ y_6 $, which could, for instance, represent an experimental
intervention. Similar to the previous example, a full identification for $ y_3
\sim l_1 + l_2 + y_6 $ still does not work. Further, because of the added correlation
between $ l_1 $ and $ l_2 $, partial identification is not possible either. The
added correlation between $l_1$ and $l_2$ opens a path from $y_5$ to $y_3$,
resulting in $y_5$ no longer being an IV for $ y_3 \sim y_4 $. However, the
conditional instrumental set criterion \citep{BritoP02} can be used here to
show that the parameter $ \lambda_{23} $ is identifiable by conditioning on
$y_6$ in both stages of the IV regression. In graphical terms, we say that
conditioning on $y_6$ $d$-separates the path between $l_1$ and
$l_2$ (Figure~\ref{fig:transform_conditional_iv}),  which means that we end up
in a similar situation as in Figure~\ref{fig:transform_non_corr}. We can
therefore use $y_5$ as an IV for the equation $y_3 \sim y_4$ once we condition
on $y_6$. As the MIIV approach does not consider conditional IVs, it is not able
to identify either of the parameters.

\subsection{Inestimable Parameters in Identified Equations}

\begin{figure}[t]
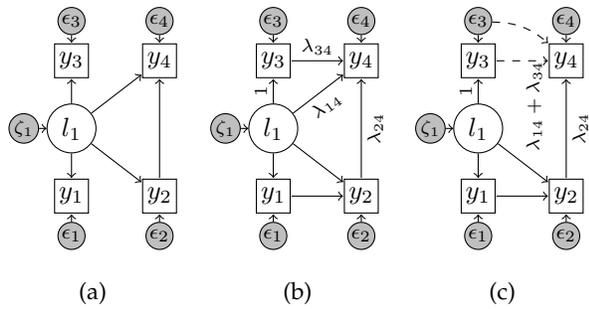

	\centering
	\begin{subfigure}[b]{0.33 \linewidth}
		\centering
		\includegraphics[page=16]{figures-inge.pdf}
		\caption{}
		\label{fig:example_original_model}
	\end{subfigure}%
	\begin{subfigure}[b]{0.33 \linewidth}
		\centering
		\includegraphics[page=17]{figures-inge.pdf}
		\caption{}
		\label{fig:example_incorrect_estimate}
	\end{subfigure}%
	\begin{subfigure}[b]{0.33 \linewidth}
		\centering
		\includegraphics[page=18]{figures-inge.pdf}
		\caption{}
		\label{fig:transform_incorrect_estimate}
	\end{subfigure}%
	\caption{(a) An example model from \citet{griliches1977estimating}
	about the economic effects of schooling. The model has $ 1 $ latent
	variable $x_1$ (Ability) with $ 4 $ observed variables $ y_1 $ (IQ), $
	y_2 $ (Schooling), $ y_3 $ (knowing how the world works), and $ y_4 $
	(Income). (b) A slightly modified version of the model in
	Figure~\ref{fig:example_original_model} where we add two new edges $ y_1
	\rightarrow y_2 $ and $ y_3 \rightarrow y_4 $.  (c) L2O transformed
	model for the equation of $ y_4 $. The transformed regression equation
	for $ y_4 $ is: $ y_4 \sim y_3 + y_2 $ but because of the
	transformation, the coefficient of $ y_3 $ has changed
	to $ \lambda_{14} + \lambda_{34} $. Because of this changed
	coefficient, even though the equation is identified, it is not possible
	to estimate either $ \lambda_{14} $ or $ \lambda_{34} $ individually.
	}
\label{fig:examples5}
\end{figure}

In the previous examples, the L2O transformation creates a new edge in the
model between two observed variables that has the same path coefficient that we
are interested in estimating. But if the L2O transformation adds a new edge
where one already exists, the new path coefficient becomes the sum of the
existing coefficient and our coefficient of interest. In such cases, certain
parameters can be inestimable even if the transformed equation is
identified according to the identification criteria.

In Figure~\ref{fig:example_original_model}, we have taken a model about the
economic effects of schooling from \citet{griliches1977estimating}. 
All parameters in the equation of $y_4$ are identifiable by using
$y_1$ and $y_2$ as the IVs. However, we get an interesting case if we
add two new edges $ y_1 \rightarrow y_3 $ and $ y_2 \rightarrow y_4 $
(Figure~\ref{fig:example_incorrect_estimate}): The L2O transformation for the
equation of $ y_4 $ adds the edges $ \epsilon_3 \rightarrow y_4 $ and $ y_3
\rightarrow y_4 $, as shown in Figure~\ref{fig:transform_incorrect_estimate}.
But since the original model already has the edge $ y_3 \rightarrow y_4 $, the
new coefficient for this edge becomes $ \lambda_{14} + \lambda_{34} $. The
regression equation for $y_4$ is still: $ y_4 \sim y_3 + y_2 $, and it is
identified according to the instrumental set criterion as $ y_2 $ and $ y_1 $
are the IVs for the equation. But if we estimate the parameters, we will obtain
values for $ \lambda_{24} $ and $ \lambda_{14} + \lambda_{34} $. Therefore, $
\lambda_{24} $ remains identifiable in this more general case, but $
\lambda_{14} $ and $\lambda_{34}$ are individually not identified. The graphical
L2O approach allows us to easily visualize such cases after transformation.


\section{\MakeUppercase{Discussion}}

In this paper, we showed the latent-to-observed (L2O) transformation on the RAM
notation and how to use it for partial equation identification.  We then
gave an equivalent graphical L2O transformation which allowed us to apply
graphical identification criteria developed in the DAG literature to latent
variable parameters in SEMs. Combining this graphical L2O transformation with
the graphical criteria for parameter identification, we arrived at a generic
approach for parameter identification in SEMs. Specifically, we showed that the
instrumental set criterion combined with the graphical L2O transformation is
equivalent to the MIIV approach. Therefore, the graphical transformation can be
used as an explicit visualization of the L2O transformation or as an
alternative way to implement the MIIV approach in computer programs.  To
illustrate this, we have implemented the MIIV approach in the graphical-based R
package \emph{dagitty} \citep{Textor2017} and the Python package \emph{pgmpy} \citep{ankan2015pgmpy}.

Our equivalence proof allows users to combine results from two largely disconnected lines of work.
By combining the graphical L2O transform with other identification criteria, 
we obtain novel identification strategies for LVSEMs, as we have illustrated 
using the conditional instrumental set criterion. Other
promising candidates would be auxiliary
variables \citep{ChenKB17} and instrumental cutsets \citep{KumorCB19}. 
Conversely, the SEM literature is more developed than the graphical literature
when it comes to non-Gaussian models. For example,
MIIV with two-stages least squares estimation is asymptotically distribution-free
\citep{bollen1996alternative},
and our results imply that normality is not required for applying the
instrumental set criterion.

\bibliography{bibliography.bib}

\appendix

\section{L2O Transformation for LISREL Models}

\label{sec:lisrel}

In this section, we show the LISREL notation of SEMs and L2O transformation as presented in \citet{bollen1996alternative}.

\subsection{LISREL Notation}

The LISREL notation of SEMs was first introduced in the LISREL (LInear
Structural Relation) software \citep{joreskog1993lisrel}. This notation 
is based on the assumption that the models have an underlying latent structure
and that the only observed variables are those that act as the measurement variables for these
latents.  This assumption allows us to split the set of model equations into
two subsets representing: the latent model and the measurement model as follows:

\begin{equation}
\label{eq:lisrel}
\begin{split}
	\textit{Latent Model:} & \\
		\eta &= \bm{B} \eta  + \bm{\Gamma} \xi + \zeta \\
	\textit{Measurement Model:} \\
		Y &= \bm{\Lambda}_Y \eta + \epsilon \\
		X &= \bm{\Lambda}_X \xi + \delta
\end{split}
\end{equation}

Here, $ \eta $ ($ \xi $) is the sets of endogenous (exogenous) latent
variables, and $ Y $ ($ X $) is the set of observed measurement
variables for $ \eta $ ($ \xi $). $ \bm{B} $, $ \bm{\Gamma} $, $
\bm{\Lambda}_Y $, and $ \bm{\Lambda}_X $ are the parameter matrices specifying
the path coefficients in the model. $ \zeta $, $ \epsilon $, $ \delta $ are the
error vectors with the covariance matrix $ \bm{\Phi}_{\zeta} $, $
\bm{\Phi}_{\epsilon} $, and $ \bm{\Phi}_{\delta} $ respectively. An example of
an SEM in LISREL notation along with its path model representation is shown in
Figure~\ref{fig:small_sem}.

From the model equations, it appears that many possible variable relations
cannot be specified directly. For example, it is not clear how to specify
direct relations between two observed variables, between an observed and latent
variable, or an error correlation between $ \zeta $ and $ \epsilon $ terms. But
these relations can be modelled in the LISREL notation by making some simple
modifications to the model \citep{bollen1989}. For example, for adding a direct
causal relation between two observed variables, we can instead use two latent
variables (with the same causal direction and path coefficient), and add the
actual observed variables as single measurement variables for each latent,
fixing the measurement errors for these relations to 0. This modification
transforms the model into having a latent and measurement levels which can be
represented in the LISREL notation.

\begin{figure}[h]
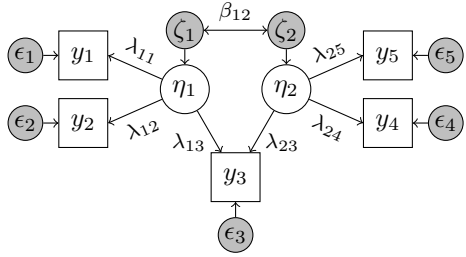

     \begin{subfigure}[b]{0.5\textwidth}
    	\centering
		\includegraphics[page=3]{figures-inge.pdf}
        \caption{}
	\label{fig:small_sem_eq}
    \end{subfigure}
    \begin{subfigure}[b]{0.5\textwidth}
    	\centering
    	\includegraphics[page=2]{figures-inge.pdf}
	\caption{}
	\label{fig:small_sem_diagram}
    \end{subfigure}
    \caption{Example of an SEM in LISREL notation.
	The parameters $ \beta_{ij} $ and $ \lambda_{ij}
	$ are the so-called \emph{path coefficients} on latent-latent and
	latent-observed arrows, respectively.  The bi-directed arrow represents
	a correlation between the error terms that is allowed to be nonzero,
	reflecting our belief that $ \eta_1 $ and $ \eta_2 $ may be correlated.
	(a) Model in equation form; (b) path diagram.}
    \label{fig:small_sem}
\end{figure}

\subsection{Algebraic L2O Transformation for LISREL Models}

We now introduce the L2O transformation as shown in
\citet{bollen1996alternative}.  Let us assume that every latent variable in the
model has a unique scaling indicator that is not an indicator of any other
latent variable. Then we can replace this latent variable by the difference of
its scaling indicator and the scaling indicator's error term. For instance,
applying this to the latent variable $ \eta_1 $ with $ y_1 $ as its scaling
indicator in the model in Figure{~\ref{fig:small_sem}} we would get

\[
\eta_1 = y_1 - \epsilon_1
\]

Applying this L2O transformation to all latent variables in the
Equation~\ref{eq:lisrel} simultaneously, we get:

\begin{equation}
	\label{eq:lisrel_obs_single_comb}
	\begin{split}
	Y_1 &= \bm{B} Y_1 + \bm{\Gamma} X_1 + \epsilon_1 - \bm{B} \epsilon_1 - \bm{\Gamma} \delta_1 + \zeta \\
	Y_2 &= \bm{\Lambda}_{Y_2} Y_1 - \bm{\Lambda}_{Y_2} \epsilon_1 + \epsilon_2 \\
	X_2 &= \bm{\Lambda}_{X_2} X_1 - \bm{\Lambda}_{X_2} \delta_1 + \delta_2
	\end{split}
\end{equation}

where $ X_1 $ and $ Y_1 $ are the scaling indicators for $ \eta $ and $ \xi $
respectively. $ X_2 $ and $ Y_2 $ are the remaining observed variables, $ X_2 = X
\setminus X_1 $ and $ Y_2 = Y \setminus Y_1 $. $ \Lambda_{X_2} $ and $
\Lambda_{Y_2} $ are submatrices of $ \Lambda_X $ and $ \Lambda_Y $ with only
rows corresponding to $ X_2 $ and $ Y_2 $. Similarly, the error terms $
\epsilon_i $ and $ \delta_i $ correspond to $ Y_i $ and $ X_i $ respectively.

As a result of applying L2O transformation to all the latent variables,
Equation~\ref{eq:lisrel_obs_single_comb} now only contains observed variables
and each of the individual equations now resembles a standard regression
equation.  However, by construction, the error terms of these equations can be
correlated with the covariates. This means that applying a standard
least-squares estimator will provide biased estimates for the model parameters.
To get unbiased estimates we can instead use an Instrumental Variable (IV)
based estimator like 2-SLS (Two-Stage Least Squares)
\citep{bollen1996alternative}. 

\section{Implied Covariance Matrix and Trek Rule}
In this section, we show how the implied covariance matrix that we use in the
paper is related to the model parameters. We show this both in algebraic and
graphical terms.

SEMs can be rewritten to a canonical form that does not require correlations
between error terms. For this, we introduce a new variable $x_{ij}$ for each
upper triangular nonzero entry $\bm{\Phi}[i,j]$ and set $\beta_{x_{ij} \to
x_i}=\beta_{x_{ij} \to x_j}=1$ and $\phi_{\epsilon_{ij}}=\bm{\Phi}[i,j]$.  The
\emph{implied covariance matrix} $\bm{\Sigma}$ of a canonical SEM is given by :
\[
\bm{\Sigma}=\bm{B}^{-T}\bm{\Phi}\bm{B}^{-1}
\]

The \emph{trek rule} \citep{sullivant2010trek} allows us to express covariances
in SEMs in graphical terms:
\begin{equation}
\label{eqn:trekrule}
\bm{\Sigma}[i,j] = \sum_{\text{ treks }\pi\text{ from } x_i  \text{ to } x_j } \; 
\phi_{\pi^\leftrightarrow}\; \prod_{\text{arrows }k\to l \text{ on trek } \pi} \beta_{k \to l} 
\end{equation}
Here it is important to note that treks can contain the same nodes
twice; this is required for the trek rule to work. 

\section{Proofs}
In this section, we give proofs for the theorems in the paper.

\subsection{Proof of Lemma~\ref{lemma:scaling}}
\begin{proof}
We assume that the SEM has been transformed to canonical form with
no bi-directed arrows. The covariance of two variables variables $u,v \neq x_i$,
is given by the trek rule (Equation~\ref{eqn:trekrule}).
Therefore $\bm{\Sigma}[u,v]$ will be the same in $\cal G$ and $\cal G'$ 
if no treks from $u$ to $v$ pass through $x_i$. 
Otherwise, let $\pi$ be a trek in ${\cal G}'$ from $u$ to $v$ that 
includes $x_i$. There are two cases. (1) $x_i$ is the top of $\pi$. 
Then $\pi$ differs between $\cal G$ and $\cal G'$ in the
 sub-trek $x_j \gets x_i \to x_k$ with path coefficient 
product in $\cal G'$ of $\alpha \beta_{x_i \to x_j} \cdot \alpha^{-2} \phi_{x_i} \cdot \alpha \beta_{x_i \to x_k} = \beta_{x_i \to x_j} \cdot \phi_{x_i} \cdot \beta_{x_i \to x_k}$ which is the same as in $\cal G$.
(2) $\pi$ has a sub-trek $x_j \to x_i \to x_k$ with path 
coefficient product $\alpha \beta_{x_j \to x_i} \cdot  \alpha^{-1} \beta_{x_i \to x_k}$,
which again is the same as in $\cal G$. 
\end{proof}

\subsection{Proof of Theorem~\ref{thm:graphicistrek}}
\begin{proof}

The first conditions in both criteria are equal, so we prove the equivalence between 
the second conditions.  

$\Rightarrow$: Suppose the condition (2) of the instrumental set criterion
(Definition~\ref{defn:graphicis}) is satisfied. We need to show that no two paths $\pi_i$ and $\pi_j$ can be $t$-separated
by one variable. Indeed, $\pi_i$ and $\pi_j$ can only be $t$-separated by one 
variable if they have a same-sided intersection $v$. But this would contradict
condition (2) of Definition~\ref{defn:graphicis}. Consequently, we need $k$ variables
to $t$-separate all paths, and since these paths are a subset of the paths
from $\mathbf{X}$ to $\mathbf{Z}$, we cannot $t$-separate these variable sets with
fewer paths either.

$\Leftarrow$: Suppose condition (2) of the trek-based instrumental set criterion
(Definition~\ref{defn:graphicistrek})
is satisfied. Then there must exist sets of $k$ treks 
from $I$ to $X$; let
 $\pi_1, \ldots, \pi_k$ be one such set of treks with minimal total length.
No $\pi_i$ intersects any $\pi_j, i \neq j,$ on the same side, otherwise we could separate
all paths with $k-1$ variables. Therefore, all intersections between the $\pi_i$
are opposite sided. Define an ordering $\preceq$ on the $\pi_i$ as follows: 
$\pi_i \preceq \pi_j$ if $\pi_i$ and $\pi_j$ intersect at a variable $k$, which 
is on the left side of $\pi_i$ and on the right side of $\pi_j$ (note that $k$ cannot be the 
top of $\pi_i$ or $\pi_j$).

Suppose that the $\pi_i$ contain a cycle of length $l$ with respect to $\preceq$, that is, 
$\pi_{i_1} \preceq \ldots \preceq \pi_{i_l} \preceq \pi_{i_{1}}$. Then we 
can combine a prefix of each trek in the cycle with a suffix of the next trek  
to create $l$ other treks between 
the same variables that cannot be separated by fewer than $l$ variables, since they 
do not have same-sided intersections. But these
new paths would be shorter than the ones on the cycle, a contradiction (see
Figure~1 for an example). 
Hence such a cycle
cannot exist, and the paths can be linearly ordered with respect to $\preceq$. Any
such ordering fulfills requirement (b) of condition (2) in Definition~\ref{defn:graphicis}.

\begin{figure}[t]
\centering
\begin{subfigure}[b]{\linewidth}
\begin{tikzpicture}[xscale=1.2,inner sep=.05pt,thick]
\node (v0) at (-2.16,1.4) {$x_1$};
\node (v1) at (-1.30,1.4) {$x_2$};
\node (v2) at (-0.435,1.4) {$x_3$};
\node (v3) at (0.385,1.4) {$x_4$};
\node (v4) at (-2.16,0.6) {$y_1$};
\node (v5) at (-1.30,0.6) {$y_2$};
\node (v6) at (-0.457,0.6) {$y_3$};
\node (v7) at (0.428,0.6) {$y_4$};
\node (v8) at (-2.17,-0.2) {$z_1$};
\node (v9) at (-1.27,-0.2) {$z_2$};
\node (v10) at (-0.468,-0.2) {$z_3$};
\node (v11) at (0.423,-0.2) {$z_4$};
\draw [->,red] (v1) edge (v0);
\draw [->,red] (v2) edge (v1);
\draw [->,blue] (v5) edge (v4);
\draw [->,blue] (v6) edge (v5);
\draw [->,blue] (v6) edge (v1);
\draw [->,blue] (v1) edge (v7);
\draw [->,orange] (v9) edge (v8);
\draw [->,orange] (v10) edge (v9);
\draw [->,orange] (v10) edge (v5);
\draw [->,orange] (v5) edge (v11);
\draw [->,red] (v2) .. controls (-3.5,3) and (-3.5,-2) .. (v9);
\draw [->,red] (v9) .. controls (1.5,-2) and (1.5,3) .. (v3);
\end{tikzpicture} \\
\caption{}
\end{subfigure}
\begin{subfigure}[b]{\linewidth}
\centering
\begin{tikzpicture}[xscale=1.2,inner sep=.05pt,thick]
\node (v0) at (-2.16,1.4) {$x_1$};
\node (v1) at (-1.30,1.4) {$x_2$};

\node (v3) at (0.385,1.4) {$x_4$};
\node (v4) at (-2.16,0.6) {$y_1$};
\node (v5) at (-1.30,0.6) {$y_2$};

\node (v7) at (0.428,0.6) {$y_4$};
\node (v8) at (-2.17,-0.2) {$z_1$};
\node (v9) at (-1.27,-0.2) {$z_2$};

\node (v11) at (0.423,-0.2) {$z_4$};
\draw [->,red] (v1) edge (v0);
\draw [->,blue] (v5) edge (v4);
\draw [->,blue] (v1) edge (v7);
\draw [->,orange] (v9) edge (v8);
\draw [->,orange] (v5) edge (v11);
\draw [->,red] (v9) .. controls (1.5,-2) and (1.5,3) .. (v3);
\end{tikzpicture} \\
\caption{}
\end{subfigure}
\caption{(a) Paths $\pi_1 = x_1 \gets x_2 \gets x_3 \to z_2 \to x_4$,
$\pi_2 = y_1 \gets y_2 \gets y_3 \to x_2 \to y_4$ and 
$\pi_3 = z_1 \gets z_2 \gets z_3 \to y_2 \to z_4$ where 
$\pi_1 \preceq \pi_2 \preceq \pi_3 \preceq \pi_1$. (b)
By rearranging segments of these paths, we obtain shorter paths 
$\pi_1' = x_1 \gets x_2 \to y_4$,
$\pi_2' = y_1 \gets y_2 \to z_4$ and 
$\pi_3' = z_1 \gets z_2 \to x_4$ between the same variables that 
intersect less then the original paths and therefore cannot be 
$t$-separated by fewer variables.
}
\end{figure}

Now assume requirement (a) is violated, that is, there exist paths $\pi_i$ from $Z_i$ to $X_i$
and $\pi_j$ from $Z_j$ to $X_j$ such that $i < j$ and $Z_j$ also occurs on $\pi_i$. Then 
$Z_j$ cannot be on the left side of $\pi_i$ because then $Z_j$ would be a same-sided intersection
of $\pi_i$ and $\pi_j$. But if $Z_j$ is on the right side of $\pi_i$, then $\pi_j \preceq \pi_i$,
a contradiction. So requirement (a) must be fulfilled as well.

\end{proof}

\subsection{Proof of Theorem~\ref{thm:algebraictographical}}
\begin{proof}
Condition 3 of the algebraic instrumental set criterion holds by definition as 
we require the covariance matrix $\bm{\Phi}$ to be positive definite (in other
words, we do not allow deterministic relations). It remains to be shown 
that the first two conditions of both criteria are equivalent. For condition (1),
assume that  some $i \in I$ is not independent of some parent $p$ of $y$, then there must be 
a trek from $i$ to $p$ which can be extended to $y$. Conversely, assume 
that there is a trek $\pi$ from $i$ to $y$ in ${\cal G}_{\overline{X}}$.
Then $\pi$ ends with an arrow $p \to y$ where $p \notin X$, so $I$ is not 
independent of the composite error term $\epsilon$. For condition (2), the equivalence
follows directly from Theorem~\ref{thm:treksep}.
\end{proof}

\end{document}